\theoremstyle{definition}
\newtheorem{definition}{Definition}
\newtheorem{theorem}{Theorem}
\newtheorem{corollary}{Corollary}
\newtheorem{proposition}{Proposition}
\newtheorem{remark}{Remark}
\newtheorem{example}{Example}
\newcommand{\all}{\forall}                   
\newcommand{\some}{\exists}                  
\newcommand{\oto}{\leftrightarrow}           
\newcommand{\N}{\mathbb{N}}                  
\newcommand{\setof}[1]{\{\,{#1}\,\}}
\newcommand{\inset}[2]{\left\{\, {#1} \,|\, {#2} \,\right\}}
\newcommand{\tuple}[1]{\langle {#1} \rangle}
\newcommand{\Hi}{\mathsf{H}}                 
\newcommand{\K}[1][]{\mathbf{K#1}}
\newcommand{\LAN}{\mathcal{L}}
\newcommand{\VAR}{\mathtt{Var}}
\newcommand{\CON}{\mathtt{Cn}}
\newcommand{\FUN}{\mathtt{Fn}}
\newcommand{\REL}{\mathtt{Rel}}
\newcommand{\TYPE}{\mathtt{TYPE}}
\newcommand{\type}{\mathtt{t}}
\newcommand{\agt}{\mathtt{agt}}
\newcommand{\obj}{\mathtt{obj}}
\newcommand{\agtobj}{\mathtt{agt\_or\_obj}}
\newcommand{\ue}{\mathsf{UE}}
\newcommand{\id}{\mathsf{Id}}
\newcommand{\ps}{\mathsf{PS}}
\newcommand{\eid}{\some\mathsf{Id}}
\newcommand{\dd}{\mathsf{DD}}
\newcommand{\axk}{\mathsf{K}}
\newcommand{\barcan}{\mathsf{BF}}
\newcommand{\kni}{\mathsf{KNI}}
\newcommand{\rmp}{\mathsf{MP}}
\newcommand{\rkg}{\mathsf{KG}}
\newcommand{\rug}{\mathsf{UG}}
\title{Semantic Incompleteness of \\
Liberman et al. (2020)'s Hilbert-style System for \\
Term-modal Logic $\mathbf{K}$ with Equality and Non-rigid Terms}
\author{Takahiro Sawasaki
\institute{Institute of Liberal Arts and Science\\
Kanazawa University\\
Kanazawa, Japan}
\email{tsawasaki@staff.kanazawa-u.ac.jp}
}
\begin{document}
\maketitle

\begin{abstract}
    In this paper, we prove the semantic incompleteness of the Hilbert-style system for the minimal normal term-modal logic with equality and non-rigid terms that was proposed in Liberman~et~al. (2020) ``Dynamic Term-modal Logics for First-order Epistemic Planning.'' Term-modal logic is a family of first-order modal logics having term-modal operators indexed with terms in the first-order language. While some first-order formula is valid over the class of all frames in the Kripke semantics for the term-modal logic proposed there, it is not derivable in Liberman~et~al. (2020)'s Hilbert-style system. We show this fact by introducing a non-standard Kripke semantics which makes the meanings of constants and function symbols relative to the meanings of relation symbols combined with them.
\end{abstract}

\section{Introduction}
\label{sec:intro}

In this paper, we prove the semantic incompleteness of the Hilbert-style system $\Hi\K$ for the minimal normal term-modal logic $\K$ with equality and non-rigid terms that was proposed in Liberman et al.~\cite{Liberman2020}. Term-modal logic, developed by Thalmann~\cite{Thalmann2000} and Fitting et al.~\cite{Fitting2001}, is a family of first-order modal logics having term-modal operators $[t]$ indexed with terms $t$ in the first-order language. In the language of term-modal logic, for example, $[x]P(x)$, $[f(x)]P(x)$ and $\all x [f(x)]P(x)$ are formulas. Term-modal logic is more expressive than multi-modal propositional logic and has been applied to epistemic logic \cite{Kooi2008,Rendsvig2010,Rendsvig2011,Sedlar2014,Achen2017,Wang2018,Wang2022,Naumov2019,Naumov2023,Liberman2019,Liberman2020,Liberman2020a} and deontic logic \cite{Sawasaki2019a,Sawasaki2021,Frijters2019,Frijters2020,Frijters2021a,Frijters2021,Frijters2023}. Some other developments of term-modal logic have been overviewed e.g. in \cite[pp. 22-4]{Liberman2020} and \cite[pp. 48-50]{Frijters2021}.

The logic developed in Liberman et al.~\cite{Liberman2020} is a first-order dynamic epistemic logic for epistemic planning, and term-modal logic is invoked as its underlying logic. Technically speaking, their term-modal logic is a two-sorted normal term-modal logic of the constant domain with equality and non-rigid terms. They make their logic two-sorted because, whereas letting the domain of a model include both agents and objects, they read an epistemically interpreted term-modal operator $K_t$ as ``\emph{agent} $t$ knows.'' The language defined in \cite{Liberman2020} allows $K_t \varphi$ to be a formula only if $t$ is a term for an agent, and thereby excludes the possibility that terms denoting objects appear in the argument of the term-modal operator. The Hilbert-style system $\Hi\K$ found in \cite[p. 17]{Liberman2020} was originally presented in \cite{Achen2017} which is probably based on \cite{Rendsvig2010,Rendsvig2011}. Later, two issues on action model and reduction axiom were fixed in the erratum \cite{Liberman2023erratum} of \cite{Liberman2020}.

Unfortunately, $\Hi\K$ is semantically incomplete due to the unprovability of a first-order formula $x = c \to (P(x) \to P(c))$. In Section~\ref{sec:semantic-incomp}, we show that it is valid over the class of all frames whereas it is unprovable in $\Hi\K$. To this end, we there introduce a non-standard Kripke semantics which makes the meanings of constants and function symbols relative to the meanings of relation symbols combined with them.

It is worth noting here that, as the above first-order formula suggests, the semantic incompleteness of $\Hi\K$ is irrelevant to its term-modal aspects. To make the point clear, let $\LAN$ be a first-order modal language having equality, constants and only the ordinary non-indexed modal operator as its modal operators, and say the semantics for first-order modal logic (the FOML-semantics for short) to refer to the Kripke semantics of the constant domain given to $\LAN$ in which the accessibility relation is just a binary relation on worlds and constants are interpreted relative to worlds. Using a semantics similar to the non-standard semantics introduced in Section~\ref{sec:semantic-incomp}, we can in fact prove that the Hilbert-style system naturally obtained from $\Hi\K$ by changing from the two-sorted term-modal language to $\LAN$ becomes semantically incomplete with respect to the FOML-semantics similarly due to the unprovability of $x = c \to (P(x) \to P(c))$. The question to be asked here is what is the formulation of a sound and complete Hilbert-style system to the FOML-semantics. To the best of our knowledge, this is still an open question. Such a Hilbert-style system seems to have never been provided together with a detailed proof in the literature.\footnote{As a sound and complete proof system with respect to the multi-modal FOML-semantics with the epistemic accessibility relation for each agent, Fagin~et~al.~\cite[p. 90]{Fagin2003} offered a Hilbert-style system having two first-order principles $A(t/x) \to \some x A$ and $t = s \to (A(t/z) \oto A(s/z))$ as axioms with a restriction that $t,s$ must be variables if $A$ has any occurrence of an (not term-modal) epistemic operator $K_{a}$. However, the proof of this system's completeness is omitted there.}

This paper will proceed as follows.
In Section~\ref{sec:liberman2020} we first introduce the syntax in \cite{Liberman2020}. Since there are some minor defects on the definitions for type, we do this with some modifications. Then we introduce the Kripke semantics and the Hilbert-style system $\Hi\K$ given in \cite{Liberman2020}.
In Section~\ref{sec:semantic-incomp} we prove the semantic incompleteness of $\Hi\K$ by introducing a non-standard Kripke semantics for which $\Hi\K$ is sound but in which $x = c \to (P(x) \to P(c))$ is not valid.

\section{Syntax, Semantics and the Hilbert-style System $\Hi\K$}
\label{sec:liberman2020}

We will first introduce the syntax presented in \cite[pp. 3-4]{Liberman2020} with some modifications. The idea there is to define the notions of term and formula while assigning (sequences of) types ``$\agt$'', ``$\obj$'' or ``$\agtobj$'' to all symbols like variables or relation symbols. It is basically the same idea as in Enderton~\cite[Section~4.3]{Enderton2001}, but there is an important difference. In the syntax of \cite{Liberman2020}, not only $\agt$ or $\obj$ but also $\agtobj$ may be assigned to the arguments of function symbols and relation symbols, so that $P(x)$ seems to be intended to become a formula even when $x$ has type $\agt$ and $P$ takes type $\agtobj$.

However, the original definitions~1--3 for the syntax seem to have two minor defects. First, the original definition~1 for type assignment and the original definition~2 for term are dependent upon one another, thus they are circular definitions. Second, whereas $P(x)$ seems to be intended to become a formula when $x$ has type $\agt$ and $P$ takes type $\agtobj$, it does not actually become a formula since the original definition~3 for formula requires that the type of $x$ and the type of the argument of $P$ must be the same. Accordingly, for example, $x = x$ cannot be a formula in any signature since the type of $x$ is either $\agt$ or $\obj$ but the type of the arguments of $=$ is always $\agtobj$.

To amend the above two defects, we redefine the syntax in \cite[pp. 3-4]{Liberman2020} as follows.

\begin{definition}[Signature]
\label{def:sig-type}
    Let $\VAR$ be a countably infinite set of \emph{variables}, $\CON$ a countable set of \emph{constants}, $\FUN$ a countable set of \emph{function symbols}, and $\REL$ a countable set of \emph{relation symbols} containing the \emph{equality symbol} $=$. Let $\tuple{\TYPE,\preccurlyeq}$ be also the ordered set of \emph{types} where $\TYPE$ $=$ $\setof{\agt,\obj,\agtobj}$ and $\preccurlyeq$ is the reflexive ordering on $\TYPE$ with $\agt \preccurlyeq \agtobj$ and $\obj \preccurlyeq \agtobj$, i.e.,
    \[
    \preccurlyeq \,\,\coloneqq\,\, \inset{\tuple{\tau,\tau}}{\tau \in \TYPE} \cup \setof{\tuple{\agt,\agtobj},\tuple{\obj,\agtobj}}.
    \]
    A \emph{type assignment} $\type \colon \VAR \cup \CON \cup \FUN \cup \REL$ $\to$ $\bigcup_{n \in \N} \TYPE^{n}$ is an assignment mapping
    \begin{enumerate}
    \renewcommand{\labelenumi}{\arabic{enumi}.}
        \item a variable $x$ to a type $\type(x) \in \setof{\agt,\obj}$ such that both $\VAR \cap \type^{-1}[\setof{\agt}]$ and $\VAR \cap \type^{-1}[\setof{\obj}]$ are countably infinite, where $\type^{-1}[X]$ is the inverse image of a set $X$;
        \item a constant $c$ to a type $\type(c) \in \setof{\agt,\obj}$;
        \item a function symbol $f$ to a sequence of types $\type(f) \in \TYPE^{n} \times \setof{\agt,\obj}$ for some $n \in \N$;
        \item the equality symbol $=$ to the sequence of types $\type(=) = \tuple{\agtobj,\agtobj}$;
        \item a relation symbol $P$ distinct from $=$ to a sequence of types $\type(P) \in \TYPE^n$ for some $n\in\N$.
    \end{enumerate}
    The tuple $\tuple{\VAR,\CON,\FUN,\REL,\type}$ is called a \emph{signature}.
\end{definition}

\begin{definition}[Term of Type]
\label{def:term}
    Let $\tuple{\VAR,\CON,\FUN,\REL,\type}$ be a signature. The set of \emph{terms of types} is defined as follows.
    \begin{enumerate}
    \renewcommand{\labelenumi}{\arabic{enumi}.}
        \item any variable $x \in \VAR$ is a term of type $\type(x)$.
        \item any constant $c \in \CON$ is a term of type $\type(c)$.
        \item If $t_1,\dots,t_n$ are terms of types $\tau_1,\dots,\tau_n$ and $f$ is a function symbol in $\FUN$ such that $\type(f)$ $=$ $\tuple{\tau'_{1},\dots,\tau'_{n},\tau'_{n+1}}$ and $\tau_i \preccurlyeq \tau'_i$, then $f(t_1,\dots,t_n)$ is a term of type $\tau'_{n+1}$.
    \end{enumerate}
    For convenience, henceforth we use a type assignment $\type$ to mean its uniquely extended assignment by letting $\type(f(t_{1},\dots,t_{n}))$ $=$ $\tau$ for each term of the form $f(t_{1},\dots,t_{n})$ of type $\tau$.
\end{definition}

\begin{definition}[Language]
\label{def:language}
    Let $\tuple{\VAR,\CON,\FUN,\REL,\type}$ be a signature. The \emph{language} is the set of formulas $\varphi$ defined in the following BNF.
    \[
        \varphi \Coloneqq P(t_1,\dots,t_n) \mid \neg \varphi \mid \varphi \land \varphi \mid K_{s}\varphi \mid \all x \varphi,
    \]
    where $t_1,\dots,t_n,s$ are terms with $\type(s)$ $=$ $\agt$ and $P \in \REL$ such that $\type(P)$ $=$ $\tuple{\tau_1,\dots,\tau_n}$ and $\type(t_i) \preccurlyeq \tau_i$. Note here that $P$ can be $=$.
\end{definition}

\noindent
As usual, we use the notations $t \neq s$ $\coloneqq \neg (t=s)$, $\varphi \to \psi$ $\coloneqq$ $\neg(\varphi \land \neg \psi)$ and $\some x \varphi$ $\coloneqq$ $\neg \all x \neg \varphi$.

We believe that our definitions successfully capture what was intended in the original definitions~1--3. On top of these definitions, we will follow \cite[p. 4]{Liberman2020} to define the notions of \emph{free variable} and \emph{bound variable} in a formula as usual, where the set of free variables in $K_{t}\varphi$ is defined as the union of the set of variables in $t$ and the set of free variables in $\varphi$. For a variable $x$, terms $t,s$ and a formula $\varphi$ such that $\type(x) = \type(s)$ and no variables in $s$ are bound variables in $\varphi$, we also define \emph{substitutions} $t(s/x)$ and $\varphi(s/x)$ of $s$ for $x$ in $t$ and $\varphi$ in a usual manner, except that $(K_{t}\varphi)(s/x)$ $=$ $K_{t(s/x)}\varphi(s/x)$. Whenever we write $t(s/x)$ or $\varphi(s/x)$, we tacitly assume that $\type(x) = \type(s)$ and no variables in $s$ are bound variables in $\varphi$. We also define the lengths of term and formula as usual.

Let us now introduce the Kripke semantics presented in \cite[pp. 5-6]{Liberman2020}.

\begin{definition}[Frame, {\cite[Def. 4]{Liberman2020}}]
\label{def:frame}
    A \emph{frame} is a tuple $F = \tuple{D,W,R}$ where
    \begin{enumerate}
    \renewcommand{\labelenumi}{\arabic{enumi}.}
        \item $D \coloneqq D_{\agtobj} \coloneqq D_{\agt} \sqcup D_{\obj}$ is the disjoint union of a non-empty set $D_{\agt}$ of \emph{agents} and a non-empty set $D_{\obj}$ of \emph{objects};
        \item $W$ is a non-empty set of \emph{worlds};
        \item $R$ is a mapping that assigns to each agent $i \in D_{\agt}$ a binary relation $R_{i}$ on $W$, i.e., $R \colon D_{\agt} \to \mathcal{P}(W \times W)$.
    \end{enumerate}
\end{definition}

\begin{definition}[Model, {\cite[Def. 5]{Liberman2020}}]
\label{def:model}
    Let $\tuple{\VAR,\CON,\FUN,\REL,\type}$ be a signature. A \emph{model} is a tuple $M = \tuple{D,W,R,I}$ where $\tuple{D,W,R}$ is a frame and $I$ is an \emph{interpretation} that maps
    \begin{enumerate}
    \renewcommand{\labelenumi}{\arabic{enumi}.}
        \item a pair $\tuple{c,w}$ of some $c \in \CON$ and some $w \in W$ to an element $I(c,w) \in D_{\type(c)}$;
        \item a pair $\tuple{f,w}$ of some $f \in \FUN$ and some $w \in W$ to a function $I(f,w) \colon (D_{\tau_{1}} \times \cdots \times D_{\tau_{n}}) \to D_{\tau_{n+1}}$, where $\type(f)$ $=$ $\tuple{\tau_{1},\dots,\tau_{n},\tau_{n+1}}$;
        \item a pair $\tuple{=,w}$ of the equality symbol $=$ and some $w \in W$ to the set $I(=,w)$ $=$\linebreak $\inset{\tuple{d,d}}{d \in D_{\agtobj}}$;
        \item a pair $\tuple{P,w}$ of some $P \in \REL \setminus \setof{=}$ and some $w \in W$ to a subset $I(P,w)$ of $D_{\tau_{1}} \times \cdots \times D_{\tau_{n}}$, where $\type(P)$ $=$ $\tuple{\tau_1,\dots,\tau_n}$.
    \end{enumerate}
\end{definition}

\begin{definition}[Valuation, {\cite[Def. 6, 7]{Liberman2020}}]
\label{def:valuation}
    A \emph{valuation} is a mapping $v \colon \VAR \to D$ such that $v(x) \in D_{\type(x)}$ and the valuation $v[x \mapsto d]$ is the same valuation as $v$ except for assigning to a variable $x$ an element $d \in D_{\type(x)}$. Given a valuation $v$, a world $w$ and an interpretation $I$ in a model, the extension $\llbracket t \rrbracket^{I,v}_{w}$ of a term $t$ is defined by $\llbracket x \rrbracket^{I,v}_{w}$ $=$ $v(x)$, $\llbracket c \rrbracket^{I,v}_{w}$ $=$ $I(c,w)$, and $\llbracket f(t_{1},\dots,f_n) \rrbracket^{I,v}_{w}$ $=$ $I(f,w)(\llbracket t_{1} \rrbracket^{I,v}_{w},\dots,\llbracket t_{n} \rrbracket^{I,v}_{w})$.
\end{definition}

\begin{definition}[Satisfaction, {\cite[Def. 8]{Liberman2020}}]
\label{def:satisfaction}
    The \emph{satisfaction} $M,w \models_v \varphi$ of a formula $\varphi$ at a world $w$ in a model $M$ under a valuation $v$ is defined as follows.
    \begin{align*}
        &M,w \models_v P(t_1,\dots,t_n) &&\text{iff} &&\tuple{\llbracket t_1 \rrbracket^{I,v}_{w},\dots,\llbracket t_n \rrbracket^{I,v}_{w}} \in I(P,w) \qquad \text{($P$ can be $=$)} \\
        &M,w \models_v \neg \varphi &&\text{iff} &&M,w \not\models_v \varphi \\
        &M,w \models_v \varphi \land \psi &&\text{iff} &&M,w \models_v \varphi \quad \text{and} \quad M,w \models_v \psi \\
        &M,w \models_v \all x \varphi &&\text{iff} &&M,w \models_{v[x \mapsto d]} \varphi \quad \text{for all $d \in D_{\type(x)}$} \\
        &M,w \models_v K_{t} \varphi &&\text{iff} &&M,w' \models_{v} \varphi \quad \text{for all $w' \in W$ such that $\tuple{w,w'} \in R_{\llbracket t \rrbracket^{I,v}_{w}}$}
    \end{align*}
\end{definition}

\begin{definition}[Validity, {\cite[p. 25]{Liberman2020}}]
\label{def:validity}
    A formula $\varphi$ is \emph{valid} if for all models $M$, all worlds $w \in W$ and all valuations $v$, it holds that $M,w \models_{v} \varphi$.
\end{definition}

\begin{remark}
    Instead of the $x$-variant of a valuation $v$ used in \cite{Liberman2020}, we adopted the valuation $v[x \mapsto d]$ to give the satisfaction for $\all x \varphi$. This change is just for the clarity of our proof and does not affect the satisfiability of formulas. As for validity, because unlike \cite{Liberman2020} we are only interested here in the validity of formula over the class of all frames, for the sake of brevity we defined the validity of formula independently of any class of frames.
\end{remark}

\noindent
For ease of reference, henceforth we call this semantics \emph{TML-semantics}.

Finally, we will introduce by Table~\ref{table:logicK} the Hilbert-style system $\Hi\K$ for the minimal normal term-modal logic $\K$ presented in Liberman et al.~\cite[p. 17]{Liberman2020}. The notion of provability is defined as usual.

\begin{table}[htb]
\begin{tabular}{ll ll}
    \toprule
    \multicolumn{4}{l}{\textbf{Axiom}} \\

    \mbox{} & all propositional tautologies & \mbox{} & \mbox{} \\
    $\ue$ & $\all x \varphi \to \varphi(y/x)$ &
        $\axk$ & $K_{t}(\varphi \to \psi) \to (K_{t} \varphi \to K_{t} \psi)$ \\
    $\id$ & $t=t$ &
        $\barcan$ & $\all x K_{t} \varphi \to K_{t}\all x \varphi$ \quad for $x$ not occurring in $t$ \\
    $\ps$ & $x=y \to (\varphi(x/z) \to \varphi(y/z))$ &
        $\kni$ & $x \neq y \to K_{t} x \neq y$ \\
    $\eid$ & $c=c \to \some x (x=c)$ & \mbox{} & \mbox{} \\
    $\dd$ & $x \neq y$ \quad if $\type(x) \neq \type(y)$ & \mbox{} & \mbox{} \\

    \multicolumn{4}{c}{\mbox{}} \\

    \multicolumn{4}{l}{\textbf{Inference rules}} \\
    $\rmp$ & \multicolumn{3}{l}{From $\varphi$ and $\varphi \to \psi$, infer $\psi$} \\
    $\rkg$ & \multicolumn{3}{l}{From $\varphi$, infer $K_{t} \varphi$} \\
    $\rug$ & \multicolumn{3}{l}{From $\varphi \to \psi$, infer $\varphi \to \all x \psi$ \quad for $x$ not free in $\varphi$} \\
    \bottomrule
\end{tabular}
\centering
\caption{The Hilbert-style system $\Hi\K$ for the minimal normal term-modal logic $\K$}
\label{table:logicK}
\end{table}

What is involving the semantic incompleteness of $\Hi\K$ here is $\ue$ and $\ps$. As remarked in Fagin et al.~\cite[pp. 88-9]{Fagin2003}, the ordinary first-order axioms $\all x \varphi \to \varphi(t/x)$ and $t = s \to (\varphi(t/z) \to \varphi(s/z))$ are not valid in Kripke semantics for first-order modal logic where constants or function symbols are interpreted as non-rigid.
In order to avoid making invalid formulas provable, Liberman et al.~\cite{Liberman2020} adopted the variable-restricted versions $\ue$ and $\ps$ of these two axioms.
The problem is that $\ps$ or its combinations with $\ue$ or $\eid$ are not sufficient to derive a valid formula $x = c \to (P(x) \to P(c))$.

\section{Semantic Incompleteness of the Hilbert-style System $\Hi\K$}
\label{sec:semantic-incomp}

In this section, we prove the semantic incompleteness of $\Hi\K$ by showing that $x = c \to (P(x) \to P(c))$ is valid in the TML-semantics but not provable in $\Hi\K$. As expected, there is no difficulty to show the former.

\begin{proposition}
\label{prop:validity-of-unprovable-formula}
    Let $\tuple{\VAR,\CON,\FUN,\REL,\type}$ be a signature, $x \in \VAR$, $c \in \CON$ and $P \in \REL$ with $\type(P)$ $=$ $\tuple{\agtobj}$.
    A formula $x = c \to (P(x) \to P(c))$ is valid in the TML-semantics.
\end{proposition}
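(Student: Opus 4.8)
The plan is to unfold the defined connective $\to$ and verify the resulting conditional directly from the satisfaction clauses in Definition~\ref{def:satisfaction}. Fix an arbitrary model $M = \tuple{D,W,R,I}$, a world $w \in W$ and a valuation $v$. Since $\varphi \to \psi$ abbreviates $\neg(\varphi \land \neg\psi)$, it suffices to show that whenever $M,w \models_v x = c$ and $M,w \models_v P(x)$, we also have $M,w \models_v P(c)$. I would first note in passing that $x = c$, $P(x)$ and $P(c)$ are genuine formulas: $\type(=) = \tuple{\agtobj,\agtobj}$ and $\type(P) = \tuple{\agtobj}$, while $\agt \preccurlyeq \agtobj$ and $\obj \preccurlyeq \agtobj$, so the type constraints in Definition~\ref{def:language} are met for any $\type(x),\type(c) \in \setof{\agt,\obj}$.

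The key step is the clause for atomic formulas built from $=$. Assuming $M,w \models_v x = c$, Definition~\ref{def:satisfaction} gives $\tuple{\llbracket x \rrbracket^{I,v}_w, \llbracket c \rrbracket^{I,v}_w} \in I(=,w)$, and by Definition~\ref{def:model} we have $I(=,w) = \inset{\tuple{d,d}}{d \in D_{\agtobj}}$, the genuine diagonal. Hence $\llbracket x \rrbracket^{I,v}_w = \llbracket c \rrbracket^{I,v}_w$, i.e. $v(x) = I(c,w)$ by Definition~\ref{def:valuation}. This is the whole point: because $=$ is forced to denote true identity on the domain, the extensions of $x$ and $c$ are literally the same element of $D$.

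It then remains to push this identity through the atomic clause for $P$. Assuming in addition $M,w \models_v P(x)$, Definition~\ref{def:satisfaction} gives $\tuple{\llbracket x \rrbracket^{I,v}_w} = \tuple{v(x)} \in I(P,w)$. Substituting $v(x) = I(c,w) = \llbracket c \rrbracket^{I,v}_w$ yields $\tuple{\llbracket c \rrbracket^{I,v}_w} \in I(P,w)$, which by the same clause is exactly $M,w \models_v P(c)$, as required. Since $M$, $w$ and $v$ were arbitrary, the formula is valid.

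I do not expect any obstacle here; the argument is a direct unfolding of the definitions, matching the remark that ``there is no difficulty to show the former.'' The one point worth flagging is the contrast with the non-standard semantics to be introduced later: validity of $x = c \to (P(x) \to P(c))$ rests entirely on the fact that in the TML-semantics the extension of $c$ does not depend on which relation symbol it is combined with, so that the single identity $v(x) = \llbracket c \rrbracket^{I,v}_w$ already suffices to transfer membership in $I(P,w)$ from $v(x)$ to $\llbracket c \rrbracket^{I,v}_w$.
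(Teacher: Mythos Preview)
Your proof is correct and follows essentially the same route as the paper's: assume $M,w \models_v x = c$ and $M,w \models_v P(x)$, use the fact that $I(=,w)$ is the diagonal to get $\llbracket x \rrbracket^{I,v}_{w} = \llbracket c \rrbracket^{I,v}_{w}$, and then transfer membership in $I(P,w)$. The extra well-formedness check and the closing remark on the contrast with the non-standard semantics are helpful elaborations but do not change the argument.
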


\begin{proof}
    Suppose $M,w \models_v x = c$ and $M,w \models_v P(x)$. Since $\llbracket x \rrbracket^{I,v}_{w} = \llbracket c \rrbracket^{I,v}_{w}$ and $\llbracket x \rrbracket^{I,v}_{w} \in I(P,w)$, we have $\llbracket c \rrbracket^{I,v}_{w} \in I(P,w)$. Thus $M,w \models_v P(c)$.
\end{proof}

To establish the unprovability of $x = c \to (P(x) \to P(c))$, it is sufficient to find a new semantics to which $\Hi\K$ is sound but in which this formula is not valid. To this end, we will first introduce the notion of non-standard model as follows.

\begin{definition}[Non-standard Model]
\label{def:col-model}
    Let $\tuple{\VAR,\CON,\FUN,\REL,\type}$ be a signature. A \emph{non-standard model} is a tuple $N = \tuple{D,W,R,J}$ where $\tuple{D,W,R}$ is a frame in the sense of Definition~\ref{def:frame} and $J$ is an interpretation that maps
    \begin{enumerate}
    \renewcommand{\labelenumi}{\arabic{enumi}.}
        \item a triple $\tuple{c,w,X}$ of some $c \in \CON$, some $w \in W$ and some $X \subseteq D^{n}$ for some $n \in \N$ to an element $J(c,w,X) \in D_{\type(c)}$;
        \item a triple $\tuple{f,w,X}$ of some $f \in \FUN$, some $w \in W$ and some $X \subseteq D^{n}$ for some $n \in \N$ to a function $J(f,w,X) \colon (D_{\tau_{1}} \times \cdots \times D_{\tau_{n}}) \to D_{\tau_{n+1}}$, where $\type(f)$ $=$ $\tuple{\tau_{1},\dots,\tau_{n+1}}$;
        \item a pair $\tuple{=,w}$ of the equality symbol $=$ and some $w \in W$ to the set $J(=,w)$ $=$\linebreak $\inset{\tuple{d,d}}{d \in D_{\agtobj}}$;
        \item a pair $\tuple{P,w}$ of some $P \in \REL \setminus \setof{=}$ and some $w \in W$ to a subset $J(P,w)$ of $D_{\tau_{1}} \times \cdots \times D_{\tau_{n}}$, where $\type(P)$ $=$ $\tuple{\tau_1,\dots,\tau_n}$.
    \end{enumerate}
\end{definition}

\noindent
Here is the intuition. A subset $X$ of $D^n$ is a set of sequences consisting of either/both of agents and objects. Thus, the set $X$ mentioned in the meanings $J(c,w,X)$ and $J(f,w,X)$ of a constant $c$ and a function symbol $f$ can serve as the meaning of a relation symbol. This trick enables us to make the meanings of constants and function symbols relative to the meanings of relation symbols combined with them.

We then define the notion of satisfaction of formula in non-standard model. In what follows, we use the same notion of valuation as in the TML-semantics and define the extension $\llbracket t \rrbracket^{J,v}_{w,X}$ of a term $t$ in a given non-standard model similarly by letting $\llbracket x \rrbracket^{J,v}_{w,X}$ $=$ $v(x)$, $\llbracket c \rrbracket^{J,v}_{w,X}$ $=$ $J(c,w,X)$ and $\llbracket f(t_1,\dots,t_n) \rrbracket^{J,v}_{w,X}$ $=$ $J(f,w,X)(\llbracket t_1 \rrbracket^{J,v}_{w,X},\dots,\llbracket t_n \rrbracket^{J,v}_{w,X})$.

\begin{definition}[Satisfaction in Non-standard Model]
\label{def:col-satisfaction}
    The \emph{satisfaction} $N,w \models_v \varphi$ of a formula $\varphi$ at a world $w$ in a non-standard model $N$ under a valuation $v$ is defined as follows.
    \begin{align*}
        &N,w \models_v P(t_1,\dots,t_n) &&\text{iff} &&\tuple{\llbracket t_1 \rrbracket^{J,v}_{w,J(P,w)},\dots,\llbracket t_n \rrbracket^{J,v}_{w,J(P,w)}} \in J(P,w) \qquad \text{($P$ can be $=$)} \\
        &N,w \models_v \neg \varphi &&\text{iff} &&N,w \not\models_v \varphi \\
        &N,w \models_v \varphi \land \psi &&\text{iff} &&N,w \models_v \varphi \quad \text{and} \quad N,w \models_v \psi \\
        &N,w \models_v \all x \varphi &&\text{iff} &&N,w \models_{v[x \mapsto d]} \varphi \quad \text{for all $d \in D_{\type(x)}$} \\
        &N,w \models_v K_{t} \varphi &&\text{iff} &&N,w' \models_{v} \varphi \quad \text{for all $w' \in W$ such that $\tuple{w,w'} \in R_{\llbracket t \rrbracket^{J,v}_{w,\emptyset}}$}
    \end{align*}
\end{definition}

\noindent
What we should pay attention here is the satisfactions of atomic formula $P(t_{1},\dots,t_{n})$ and term-modal formula $K_{t}\varphi$. In the satisfaction of $P(t_{1},\dots,t_{n})$ in non-standard model, the meaning $\llbracket t_{i} \rrbracket^{J,v}_{w,J(P,w)}$ of each $t_{i}$ in $P(t_{1},\dots,t_{n})$ is determined by the interpretation $J$, the valuation $v$, the world $w$ and \emph{the meaning $J(P,w)$ of the relation symbol $P$ combined with terms $t_{1},\dots,t_{n}$}. Thus, as explained in the following Example~\ref{ex:lewis}, the meaning of a constant $c$ occurring in $P(c)$ could be different from that of $c$ occurring in $Q(c)$.

\begin{example}
\label{ex:lewis}
    Let $lewis \in \CON$ with $\type(lewis) = \agt$ and $SL,CF \in \REL$ with $\type(SL) = \type(CF) = \tuple{\agt}$, and consider a non-standard model such that
    \begin{align*}
        J(SL,w) &= \{i \in D_{\agt} \,|\, \text{$i$ is one of the authors of \textit{Symbolic Logic}} \}, \\
        J(CF,w) &= \inset{i \in D_{\agt}}{\text{$i$ is the author of \textit{Counterfactuals}}},
    \end{align*}
    $J(lewis,w,J(SL,w))$ is C. I. Lewis and $J(lewis,w,J(CF,w))$ is D. Lewis. The meaning\linebreak $J(lewis,w,J(SL,w))$ of $lewis$ occurring in $SL(lewis)$ is then different from the meaning\linebreak $J(lewis,w,J(CF,w))$ of $lewis$ occurring in $CF(lewis)$. Note that, although $J(lewis,w,J(SL,w)) \in J(SL,w)$ holds in the above non-standard model, we can technically have a non-standard model such that\linebreak $J(lewis,w,J(SL,w)) \notin J(SL,w)$ holds by assigning D. Lewis to $J(lewis,w,J(SL,w))$.
\end{example}

\noindent On the other hand, because the meaning $\llbracket t \rrbracket^{J,v}_{w,\emptyset}$ of $t$ in $K_t$ is determined independently of the meaning of any relation symbol, the satisfaction of $K_{t}\varphi$ in non-standard model is in effect the same as the satisfaction of $K_{t}\varphi$ in model of the TML-semantics. By this fact we can validate axioms $\axk$ and $\barcan$ in this semantics.

The notion of validity is defined as in the TML-semantics. For ease of reference, henceforth we call this semantics \emph{non-standard semantics}.

Now it is easy to see the invalidity of $x = c \to (P(x) \to P(c))$ in the non-standard semantics.

\begin{proposition}
\label{prop:invalidity-of-unprovable-formula}
    Let $\tuple{\VAR,\CON,\FUN,\REL,\type}$ be a signature, $x \in \VAR$, $c \in \CON$ with $\type(x) = \type(c)$ and $P \in \REL$ with $\type(P)$ $=$ $\tuple{\agtobj}$.
    A formula $x = c \to (P(x) \to P(c))$ is not valid in the non-standard semantics.
\end{proposition}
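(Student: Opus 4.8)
The plan is to exhibit a single non-standard model $N=\tuple{D,W,R,J}$ together with a world $w$ and a valuation $v$ at which $x = c$ and $P(x)$ both hold but $P(c)$ fails. By the definition of $\to$ (i.e.\ $\varphi \to \psi \coloneqq \neg(\varphi \land \neg\psi)$) this makes $x = c \to (P(x) \to P(c))$ false at $w$ under $v$, hence not valid in the non-standard semantics.

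First I would set up the domain and frame: take $D_{\agt}$ and $D_{\obj}$ to be any nonempty sets, so $D = D_{\agtobj}$ is fixed, pick $W = \setof{w}$ a singleton and $R$ arbitrary (it plays no role). Choose two distinct elements $d, e \in D_{\type(c)}$ — this is possible since we may take the relevant domain to have at least two elements. Now the heart of the construction is the interpretation $J$ of the constant $c$, which in a non-standard model depends on a triple $\tuple{c,w,X}$ with $X \subseteq D^{n}$. I would put $J(c,w,X) = d$ when $X = J(P,w)$ and $J(c,w,X) = e$ for all other arguments $X$ (in particular for $X = \emptyset$); the choice for $X \neq J(P,w)$ other than making it land in $D_{\type(c)}$ is immaterial. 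For the relation symbol $P$ of type $\tuple{\agtobj}$, I would set $J(P,w)$ to be some subset of $D = D_{\agtobj}$ containing $e$ but not $d$ — e.g.\ $J(P,w) = D \setminus \setof{d}$, which contains $e$ since $e \neq d$. Finally let $v$ be any valuation with $v(x) = e$; this is legitimate because $\type(x) = \type(c)$ so $e \in D_{\type(x)}$.

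Now I would check the three satisfaction facts. For $x = c$: the extension of $x$ under the argument $X = J(=,w)$ is $v(x) = e$, and the extension of $c$ under that same argument is $J(c,w,J(=,w))$; since $J(=,w) \neq J(P,w)$ in our model (the former is the diagonal on $D_{\agtobj}$, the latter is $D \setminus \setof{d}$, and these differ whenever $D$ has more than one element), we get $J(c,w,J(=,w)) = e$ as well, so $\tuple{e,e} \in J(=,w)$ and $N,w \models_v x = c$. (If one prefers to avoid the side computation about $J(=,w) \neq J(P,w)$, one can instead define $J(c,w,X) = d$ only for $X = J(P,w)$ and $=e$ otherwise, and additionally arrange that $J(=,w)$ likewise gives $e$ — but the diagonal-versus-$J(P,w)$ argument is clean enough.) For $P(x)$: the extension of $x$ under argument $X = J(P,w)$ is $v(x) = e$, and $e \in J(P,w)$, so $N,w \models_v P(x)$. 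For $\neg P(c)$: the extension of $c$ under argument $X = J(P,w)$ is, by construction, $J(c,w,J(P,w)) = d$, and $d \notin J(P,w)$, so $N,w \not\models_v P(c)$, i.e.\ $N,w \models_v \neg P(c)$.

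Putting these together, $N,w \models_v (x = c) \land \neg(P(x) \to P(c))$, hence $N,w \not\models_v x = c \to (P(x) \to P(c))$, establishing the claim. The only genuinely delicate point is making sure the argument $X$ that controls the value of $c$ inside $P(c)$ — namely $J(P,w)$ — is distinct from the argument controlling $c$ inside the equality atom $x = c$ — namely $J(=,w)$ — so that $c$ can denote $d$ in the first context and $e$ in the second without contradiction; with $W$ a singleton and $J(P,w) = D \setminus \setof{d}$ versus $J(=,w)$ the diagonal, this separation is automatic as soon as $\lvert D_{\type(c)}\rvert \geq 2$. Everything else is a direct unwinding of Definitions~\ref{def:col-model} and \ref{def:col-satisfaction}.
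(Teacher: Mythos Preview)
Your proof is correct and follows essentially the same approach as the paper: construct a non-standard model in which $c$ denotes one element of $D_{\type(c)}$ when evaluated relative to $J(=,w)$ and a different element when evaluated relative to $J(P,w)$, then choose $v(x)$ and $J(P,w)$ so that $x=c$ and $P(x)$ hold but $P(c)$ fails. The paper's construction is a slightly more economical instance of yours (it takes $D_{\agt}=\{\alpha,\beta\}$, $J(P,w)=\{\alpha\}$, $v(x)=\alpha$, and specifies only the two relevant values of $J(c,w,\cdot)$), but the idea and the verification are the same.
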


\begin{proof}
    We may assume $\type(x) = \type(c) = \agt$ without loss of generality.
    Let $N = \tuple{D,W,R,J}$ be a non-standard model such that $w \in W$, $D_{\agt}$ $=$ $\setof{\alpha,\beta}$, $J(c,w,\{\,\tuple{d,d}\,\mid$ $d \in D_{\agtobj}\,\})$ $=$ $\alpha$, $J(c,w,\setof{\alpha})$ $=$ $\beta$ and $J(P,w)$ $=$ $\setof{\alpha}$. Let $v$ be also a valuation such that $v(x) = \alpha$. Since
    \[
        \llbracket x \rrbracket^{J,v}_{w,J(=,w)} = v(x) = \alpha = J(c,w,\{\,\tuple{d,d}\,\mid d \in D_{\agtobj}\,\}) = J(c,w,J(=,w)) = \llbracket c \rrbracket^{J,v}_{w,J(=,w)},
    \]
    we have $N,w \models_v x = c$. It is also easy to see $N,w \models_v P(x)$. However, since
    \[
        \llbracket c \rrbracket^{J,v}_{w,J(P,w)} = J(c,w,J(P,w)) = J(c,w,\setof{\alpha}) = \beta,
    \]
    it fails that $N,w \models_v P(c)$. Therefore $x=c \to (P(x) \to P(c))$ is not valid in the non-standard semantics.
\end{proof}

On top of this, we can prove as below that $\Hi\K$ is sound with respect to the non-standard semantics.

\begin{proposition}
\label{prop:valuation}
    Let $\tuple{\VAR,\CON,\FUN,\REL,\type}$ be a signature and $x,y \in \VAR$ with $\type(x) = \type(y)$. Let $N$ $=$ $\tuple{D,W,R,J}$ be also a non-standard model, $w$ a world, $X$ a subset of $D^n$ for some $n \in \N$ and $v$ a valuation. For all terms $t$,
    \[
        \llbracket t(y/x) \rrbracket^{J,v}_{w,X} \quad = \quad \llbracket t \rrbracket^{J,v[x \mapsto v(y)]}_{w,X}.
    \]
\end{proposition}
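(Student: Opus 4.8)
The plan is to prove this by a straightforward induction on the structure of the term $t$, mirroring the usual substitution lemma for first-order terms; the one thing to watch is that the extra subscript $X$ travels along as an inert parameter that is never modified in the recursion. Before starting, I would record two well-definedness points that the hypothesis $\type(x) = \type(y)$ secures: first, $t(y/x)$ is itself a term, by the substitution convention stated after Definition~\ref{def:language}; second, $v[x \mapsto v(y)]$ is a genuine valuation, since $v(y) \in D_{\type(y)} = D_{\type(x)}$.

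For the base cases I would argue as follows. If $t$ is a variable $z$ with $z = x$, then $t(y/x) = y$, so the left-hand side is $v(y)$, while the right-hand side is $v[x \mapsto v(y)](x) = v(y)$. If $t$ is a variable $z$ with $z \neq x$, then $t(y/x) = z$ and $v[x \mapsto v(y)](z) = v(z)$, so both sides equal $v(z)$. If $t$ is a constant $c$, then $t(y/x) = c$ and, since the extension of a constant in a non-standard model depends only on $c$, $w$ and $X$ and not on the valuation, both sides equal $J(c,w,X)$.

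For the inductive step, suppose $t = f(t_1,\dots,t_n)$ and the claim holds for each $t_i$. By the definition of substitution, $t(y/x) = f(t_1(y/x),\dots,t_n(y/x))$, so by the clause defining the extension of a function term in a non-standard model,
\[
\llbracket t(y/x) \rrbracket^{J,v}_{w,X} = J(f,w,X)\bigl(\llbracket t_1(y/x) \rrbracket^{J,v}_{w,X},\dots,\llbracket t_n(y/x) \rrbracket^{J,v}_{w,X}\bigr).
\]
Applying the induction hypothesis to each argument and then reading the same clause in reverse with the valuation $v[x \mapsto v(y)]$ in place of $v$ yields $J(f,w,X)(\llbracket t_1 \rrbracket^{J,v[x \mapsto v(y)]}_{w,X},\dots,\llbracket t_n \rrbracket^{J,v[x \mapsto v(y)]}_{w,X}) = \llbracket f(t_1,\dots,t_n) \rrbracket^{J,v[x \mapsto v(y)]}_{w,X}$, which is the required equation.

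I do not expect a genuine obstacle here: this is routine structural induction, and the subscript $X$ plays no active role because term extensions --- unlike the atomic-formula clause of Definition~\ref{def:col-satisfaction}, which replaces it by $J(P,w)$ --- never touch it. The only points needing a little care are the two well-definedness remarks above and, in the function case, the observation that literally the same function $J(f,w,X)$ appears on both sides, which is immediate since $J(f,w,X)$ is determined by $f$, $w$, $X$ alone and none of these is disturbed by the substitution or the valuation shift.
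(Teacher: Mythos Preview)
Your proof is correct and follows essentially the same approach as the paper's own proof: a straightforward induction on terms with the three base cases (the variable $x$, a variable $z \neq x$, a constant) and the function-symbol inductive step, all computed exactly as you do. The paper does not spell out the two well-definedness remarks you make, so your write-up is slightly more explicit, but the mathematical content is identical.
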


\begin{proof}
    By induction on the length of terms.
    \begin{itemize}
        \item For $t$ being the variable $x$, \,\, $\llbracket x(y/x) \rrbracket^{J,v}_{w,X}$ $=$ $v(y)$ $=$ $v[x \mapsto v(y)](x)$ $=$ $\llbracket x \rrbracket^{J,v[x \mapsto v(y)]}_{w,X}$.
        \item For $t$ being a variable $z$ distinct from $x$, \,\, $\llbracket z(y/x) \rrbracket^{J,v}_{w,X}$ $=$ $v(z)$ $=$ $v[x \mapsto v(y)](z)$ $=$ $\llbracket z \rrbracket^{J,v[x \mapsto v(y)]}_{w,X}$.
        \item For $t$ being a constant $c$, \,\, $\llbracket c(y/x) \rrbracket^{J,v}_{w,X}$ $=$ $J(c,w,X)$ $=$ $\llbracket c \rrbracket^{J,v[x \mapsto v(y)]}_{w,X}$.
        \item For $t$ being of the form $f(t_{1},\dots,t_{n})$,
        \begin{align*}
            \llbracket f(t_{1},\dots,t_{n})(y/x) \rrbracket^{J,v}_{w,X} &= \llbracket f(t_{1}(y/x),\dots,t_{n}(y/x)) \rrbracket^{J,v}_{w,X} \\
            &= J(f,w,X)(\llbracket t_{1}(y/x) \rrbracket^{J,v}_{w,X},\dots,\llbracket t_{n}(y/x) \rrbracket^{J,v}_{w,X}) \\
            &= J(f,w,X)(\llbracket t_{1} \rrbracket^{J,v[x \mapsto v(y)]}_{w,X},\dots,\llbracket t_{n} \rrbracket^{J,v[x \mapsto v(y)]}_{w,X}) \quad \text{(inductive hypothesis)} \\
            &= \llbracket f(t_{1},\dots,t_{n}) \rrbracket^{J,v[x \mapsto v(y)]}_{w,X}.
        \end{align*}
    \end{itemize}
\end{proof}

\begin{proposition}
\label{prop:forall}
Let $\tuple{\VAR,\CON,\FUN,\REL,\type}$ be a signature, $x,y \in \VAR$ with $\type(x) = \type(y)$ and $N$ $=$ $\tuple{D,W,R,J}$ a non-standard model. For all worlds $w$, all valuations $v$ and all formulas $\varphi$,
\[
    N,w \models_{v} \varphi(y/x) \quad \text{iff} \quad N,w \models_{v[x \mapsto v(y)]} \varphi.
\]
\end{proposition}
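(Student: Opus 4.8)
The plan is to argue by induction on the length of the formula $\varphi$, with the term-substitution lemma Proposition~\ref{prop:valuation} doing the real work in the atomic case and in the index of a term-modal operator. The statement to be proved at each stage is the displayed biconditional \emph{for all worlds $w$ and all valuations $v$ at once}, so that the inductive hypothesis may be invoked at an arbitrary world and valuation for any shorter formula. Recall also that writing $\varphi(y/x)$ already presupposes $\type(x)=\type(y)$ (hence $v[x\mapsto v(y)]$ is a legitimate valuation, since $v(y)\in D_{\type(y)}=D_{\type(x)}$) and that no variable of $y$, i.e.\ $y$ itself, is a bound variable of $\varphi$; both facts will be used.

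For the base case $\varphi=P(t_1,\dots,t_n)$ (with $P$ possibly $=$), we have $\varphi(y/x)=P(t_1(y/x),\dots,t_n(y/x))$, and by Definition~\ref{def:col-satisfaction} the left-hand side holds iff $\tuple{\llbracket t_1(y/x)\rrbracket^{J,v}_{w,J(P,w)},\dots,\llbracket t_n(y/x)\rrbracket^{J,v}_{w,J(P,w)}}\in J(P,w)$. Applying Proposition~\ref{prop:valuation} with $X=J(P,w)$ to each $t_i$ rewrites every coordinate as $\llbracket t_i\rrbracket^{J,v[x\mapsto v(y)]}_{w,J(P,w)}$, which is exactly the condition for $N,w\models_{v[x\mapsto v(y)]}P(t_1,\dots,t_n)$. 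The Boolean cases $\neg$ and $\land$ are immediate from the inductive hypothesis, since substitution commutes with these connectives and the valuation is untouched. For $\varphi=K_t\psi$ we use $(K_t\psi)(y/x)=K_{t(y/x)}\psi(y/x)$: Proposition~\ref{prop:valuation} with $X=\emptyset$ gives $\llbracket t(y/x)\rrbracket^{J,v}_{w,\emptyset}=\llbracket t\rrbracket^{J,v[x\mapsto v(y)]}_{w,\emptyset}$, so both formulas quantify over exactly the same accessible worlds $w'$, and for each such $w'$ the inductive hypothesis at $w'$ (with the same $v$) equates $N,w'\models_v\psi(y/x)$ with $N,w'\models_{v[x\mapsto v(y)]}\psi$.

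For $\varphi=\all z\,\psi$ there are two subcases. If $z=x$, then $(\all x\,\psi)(y/x)=\all x\,\psi$, and the right-hand side unwinds to ``$N,w\models_{v[x\mapsto v(y)][x\mapsto d]}\psi$ for all $d\in D_{\type(x)}$''; since $v[x\mapsto v(y)][x\mapsto d]=v[x\mapsto d]$, this is literally the left-hand side. If $z\neq x$, then the standing assumption that $y$ is not bound in $\all z\,\psi$ forces $z\neq y$, so $(\all z\,\psi)(y/x)=\all z\,\psi(y/x)$; applying the inductive hypothesis to $\psi$ at the valuation $v[z\mapsto d]$, and using $v[z\mapsto d](y)=v(y)$ together with $v[z\mapsto d][x\mapsto v(y)]=v[x\mapsto v(y)][z\mapsto d]$, turns the left-hand side into ``$N,w\models_{v[x\mapsto v(y)][z\mapsto d]}\psi$ for all $d$'', i.e.\ the right-hand side.

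I expect the universal-quantifier case to be the only point demanding genuine care: one must split on whether the bound variable coincides with $x$, invoke the hypothesis that $y$ does not occur bound in $\varphi$ to exclude $z=y$, and track precisely the commutation of the two valuation updates. Everything else is a mechanical unfolding of Definition~\ref{def:col-satisfaction} together with Proposition~\ref{prop:valuation}.
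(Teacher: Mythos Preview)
Your proof is correct and follows essentially the same approach as the paper's own proof: induction on formula length, with Proposition~\ref{prop:valuation} handling the atomic and $K_t$ cases, and the $\all z$ case split according to whether $z=x$, using $z\neq y$ (from the standing assumption that $y$ is not bound in $\varphi$) to justify the commutation of valuation updates. The only cosmetic difference is that in the subcase $z=x$ you unfold the semantics one step further than the paper does, but the argument is the same.
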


\begin{proof}
    By induction on the length of formulas. Since the proof of the cases for $\neg \psi$ and $\psi \land \gamma$ are straightforward, we see only the cases for $P(t_{1},\dots,t_{n})$, $\all z \psi$ and $K_{t} \psi$.

    \begin{itemize}
        \item For $\varphi$ being of the form $P(t_{1},\dots,t_{n})$,
        \begin{align*}
            N,w \models_{v} P(t_{1},\dots,t_{n})(y/x) \quad &\text{iff} \quad
            \tuple{\llbracket t_{1}(y/x) \rrbracket^{J,v}_{w,J(P,w)},\dots,\llbracket t_{n}(y/x) \rrbracket^{J,v}_{w,J(P,w)}} \in J(P,w) \\
            &\text{iff} \quad \tuple{\llbracket t_{1} \rrbracket^{J,v[x \mapsto v(y)]}_{w,J(P,w)},\dots,\llbracket t_{n} \rrbracket^{J,v[x \mapsto v(y)]}_{w,J(P,w)}} \in J(P,w) \quad \text{(Proposition~\ref{prop:valuation})} \\
            &\text{iff} \quad N,w \models_{v[x \mapsto v(y)]} P(t_{1},\dots,t_{n})
        \end{align*}

        \item For $\varphi$ being of the form $\all z \psi$, if $z = x$, then $N,w \models_{v} (\all x \psi)(y/x)$ iff $N,w \models_{v} \all x \psi$ iff $N,w \models_{v[x \mapsto v(y)]} \all x \psi$. So suppose $z \neq x$. Then
        \begin{align*}
            N,w \models_{v} (\all z \psi)(y/x) \quad &\text{iff} \quad N,w \models_{v} \all z \psi(y/x) \\
            &\text{iff} \quad N,w \models_{v[z \mapsto d]} \psi(y/x) \quad \text{for all $d \in D_{\type(z)}$} \\
            &\text{iff} \quad N,w \models_{v[z \mapsto d][x \mapsto v[z \mapsto d](y)]} \psi \quad \text{for all $d \in D_{\type(z)}$} \quad \text{(inductive hypothesis)} \\
            &\text{iff} \quad N,w \models_{v[x \mapsto v(y)][z \mapsto d]} \psi \quad \text{for all $d \in D_{\type(z)}$} \quad \text{($z \neq x$ and $z \neq y$)} \\
            &\text{iff} \quad N,w \models_{v[x \mapsto v(y)]} \all z \psi \\
        \end{align*}

        \item For $\varphi$ being of the form $K_{t} \psi$,
        \begin{align*}
            N,w \models_{v} (K_{t}\psi)(y/x) \quad &\text{iff} \quad N,w \models_{v} K_{t(y/x)} \psi(y/x) \\
            &\text{iff} \quad N,w' \models_{v} \psi(y/x) \quad \text{all $w' \in W$ such that $\tuple{w,w'} \in R_{\llbracket t(y/x) \rrbracket^{J,v}_{w,\emptyset}}$} \\
            &\text{iff} \quad N,w' \models_{v} \psi(y/x) \quad \text{all $w' \in W$ such that $\tuple{w,w'} \in R_{\llbracket t \rrbracket^{J,v[x \mapsto v(y)]}_{w,\emptyset}}$} \\
            &\phantom{\text{iff}} \quad \text{(Proposition~\ref{prop:valuation})} \\
            &\text{iff} \quad N,w' \models_{v[x \mapsto v(y)]} \psi \quad \text{all $w' \in W$ such that $\tuple{w,w'} \in R_{\llbracket t \rrbracket^{J,v[x \mapsto v(y)]}_{w,\emptyset}}$} \\
            &\phantom{\text{iff}} \quad \text{(inductive hypothesis)} \\
            &\text{iff} \quad N,w \models_{v[x \mapsto v(y)]} K_{t} \psi.
        \end{align*}
    \end{itemize}
\end{proof}

\begin{theorem}[Soundness]
\label{thm:soundness}
    If $\varphi$ is provable in $\Hi\K$, then $\varphi$ is valid in the non-standard semantics.
\end{theorem}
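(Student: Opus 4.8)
The plan is to induct on the length of a derivation in $\Hi\K$, which reduces the claim to two things: every axiom in Table~\ref{table:logicK} is valid in the non-standard semantics, and each of the rules $\rmp$, $\rkg$, $\rug$ preserves validity. Two routine \emph{coincidence lemmas}, each proved by an easy induction on length, will be used freely: (a) $\llbracket t \rrbracket^{J,v}_{w,X}$ depends only on the values of $v$ on the variables occurring in $t$, so that $\llbracket t \rrbracket^{J,v[x\mapsto d]}_{w,X} = \llbracket t \rrbracket^{J,v}_{w,X}$ whenever $x$ does not occur in $t$; and (b) whether $N,w\models_v\varphi$ depends only on the values of $v$ on the free variables of $\varphi$. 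I also record the trivial but crucial fact that, for a variable $x$, the extension $\llbracket x \rrbracket^{J,v}_{w,X} = v(x)$ does not depend on $w$ or on the parameter $X$ at all.

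The propositional and modal parts are standard. The boolean clauses of Definition~\ref{def:col-satisfaction} are the usual ones, so every instance of a propositional tautology is valid and $\rmp$ preserves validity. In the clause for $K_{t}$ the parameter is the fixed empty set, so at each point $K_{t}$ behaves exactly like an ordinary normal box indexed by the agent $\llbracket t \rrbracket^{J,v}_{w,\emptyset}$; hence $\axk$ is valid and $\rkg$ preserves validity by the usual normal-modal argument, and $\barcan$ follows from its side condition (use (a) to see $\llbracket t \rrbracket^{J,v[x\mapsto d]}_{w,\emptyset} = \llbracket t \rrbracket^{J,v}_{w,\emptyset}$, then swap the two universal quantifiers). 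For $\dd$ and $\kni$: since $v(x)\in D_{\type(x)}$, $v(y)\in D_{\type(y)}$ and $D_{\agt}\sqcup D_{\obj}$ is a disjoint union, $\type(x)\neq\type(y)$ forces $v(x)\neq v(y)$, so $\langle v(x),v(y)\rangle\notin J(=,w)$; and once $v(x)\neq v(y)$ this holds at every world, so wherever $x\neq y$ is satisfied so is $K_{t}(x\neq y)$. For $\rug$: if $\varphi\to\psi$ is valid, $N,w\models_v\varphi$, and $x$ is not free in $\varphi$, then $N,w\models_{v[x\mapsto d]}\varphi$ for every $d\in D_{\type(x)}$ by (b), hence $N,w\models_{v[x\mapsto d]}\psi$, i.e. $N,w\models_v\all x\psi$.

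The first-order and equality axioms are where Proposition~\ref{prop:forall} does the work. For $\ue$: from $N,w\models_v\all x\varphi$ we obtain $N,w\models_{v[x\mapsto v(y)]}\varphi$ (instantiating with $d=v(y)$, legitimate since $\type(y)=\type(x)$), which Proposition~\ref{prop:forall} rewrites as $N,w\models_v\varphi(y/x)$. For $\id$: the pair $\langle\llbracket t \rrbracket^{J,v}_{w,J(=,w)},\llbracket t \rrbracket^{J,v}_{w,J(=,w)}\rangle$ lies on the diagonal $J(=,w)=\inset{\tuple{d,d}}{d\in D_{\agtobj}}$. For $\eid$: the consequent is already valid, witnessed by the element $J(c,w,J(=,w))\in D_{\type(c)}$. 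For $\ps$: if $N,w\models_v x=y$ then $\llbracket x \rrbracket^{J,v}_{w,J(=,w)}=\llbracket y \rrbracket^{J,v}_{w,J(=,w)}$, but the two sides are just $v(x)$ and $v(y)$, so $v(x)=v(y)$ and hence $v[z\mapsto v(x)]=v[z\mapsto v(y)]$; applying Proposition~\ref{prop:forall} twice gives $N,w\models_v\varphi(x/z)$ iff $N,w\models_{v[z\mapsto v(x)]}\varphi$ iff $N,w\models_{v[z\mapsto v(y)]}\varphi$ iff $N,w\models_v\varphi(y/z)$.

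I expect $\ps$ (and, to a lesser extent, $\ue$) to be the only delicate point, because soundness there hinges on the restriction of those axioms to variables. Since a variable's extension ignores the parameter $X$, passing from the equation $x=y$ — evaluated in Definition~\ref{def:col-satisfaction} with parameter $J(=,w)$ — to the formulas $\varphi(x/z)$ and $\varphi(y/z)$ — where inside an atom $P(\dots)$ the parameter is the possibly different set $J(P,w)$ — produces no clash, and Proposition~\ref{prop:forall} applies verbatim. Were $\ps$ stated for arbitrary terms, the parameter-dependence of $J(c,w,\cdot)$ and $J(f,w,\cdot)$ built into Definition~\ref{def:col-model} would reintroduce precisely the counterexample of Proposition~\ref{prop:invalidity-of-unprovable-formula}. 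So verifying $\ps$ is exactly the place where the design of the non-standard semantics earns its keep; everything else in the induction is bookkeeping.
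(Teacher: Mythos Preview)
Your proof is correct and follows the same overall plan as the paper: verify each axiom and each rule in turn, leaning on Proposition~\ref{prop:forall} for the first-order axioms. The one noteworthy difference is in the treatment of $\ps$. The paper re-runs an induction on the structure of $\varphi$ inside the soundness proof, using Proposition~\ref{prop:valuation} (the term-level substitution lemma) at the atomic and modal base cases. You instead observe that $v(x)=v(y)$ gives $v[z\mapsto v(x)]=v[z\mapsto v(y)]$ and invoke Proposition~\ref{prop:forall} twice to obtain $N,w\models_v\varphi(x/z)\iff N,w\models_v\varphi(y/z)$ in one line. This is a genuine streamlining: the induction on $\varphi$ has already been done once in the proof of Proposition~\ref{prop:forall}, so repeating it for $\ps$ is redundant. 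Your route is shorter and makes the conceptual point (that the variable restriction in $\ps$ is exactly what lets one collapse the parameter $X$) just as clearly. Everything else---tautologies, $\ue$, $\id$, $\eid$, $\dd$, $\axk$, $\barcan$, $\kni$, and the three rules---matches the paper's argument, with your coincidence lemmas (a) and (b) making explicit what the paper uses tacitly.
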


\begin{proof}
    It is sufficient to prove that all axioms are valid and that all inference rules preserve validity. Since the proof of the latter is done as usual, we see only the former.
    \begin{itemize}
        \item For any propositional tautology, its validity is obvious since the non-standard semantics gives the ordinary satisfactions for $\neg$ and $\land$.
        \item For $\ue$, i.e., $\all x \varphi \to \varphi(y/x)$, suppose $N,w \models_v \all x \varphi$. Then $N,w \models_{v[x \mapsto v(y)]} \varphi$. Thus by Proposition~\ref{prop:forall} $N,w \models_v \varphi(y/x)$ holds, as required.
        \item For $\id$, i.e., $t=t$, its validity is obvious.
        \item For $\ps$, i.e., $x=y \to (\varphi(x/z) \to \varphi(y/z))$, its validity is shown by induction on $\varphi$.
        \begin{itemize}
            \item For $\varphi$ being of the form $P(t_1,\dots,t_n)$, suppose $N,w \models_v x = y$ and $N,w \models_v P(t_1,\dots,t_n)(x/z)$. Since
            \[
                \tuple{\llbracket t_1(x/z) \rrbracket^{J,v}_{w,J(P,w)},\dots,\llbracket t_n(x/z) \rrbracket^{J,v}_{w,J(P,w)}} \in J(P,w),
            \]
            we can use $v(x) = v(y)$ and Proposition~\ref{prop:valuation} to obtain
            \[
                \tuple{\llbracket t_1(y/z) \rrbracket^{J,v}_{w,J(P,w)},\dots,\llbracket t_n(y/z) \rrbracket^{J,v}_{w,J(P,w)}} \in J(P,w).
            \]
            Thus $N,w \models_v P(t_1,\dots,t_n)(y/z)$.
            \item For $\varphi$ being of the forms $\neg \psi$ or $\psi \land \gamma$, the proof is straightforward.
            \item For $\varphi$ being of the form $\all z' \psi$, suppose $N,w \models_v x = y$ and $N,w \models_v (\all z' \psi)(x/z)$.
            If $z' = z$, obviously $N,w \models_v (\all z' \psi)(y/z)$. If $z' \neq z$, then we have $N,w \models_v \all z' \psi(x/z)$ thus $N,w \models_{v[z' \mapsto d]} \psi(x/z)$ for all $d \in D_{\type(z')}$. Since we have $N,w \models_{v[z' \mapsto d]} x = y$ for all $d \in D_{\type(z')}$, by inductive hypothesis we obtain $N,w \models_{v[z' \mapsto d]} \psi(y/z)$ for all $d \in D_{\type(z')}$. Therefore, $N,w \models_v (\all z' \psi)(y/z)$.
            \item For $\varphi$ being of the form $K_{t}\psi$, suppose $N,w \models_v x = y$ and $N,w \models_v (K_{t}\psi)(x/z)$. Then $N,w' \models_v \psi(x/z)$ for all $w' \in W$ such that $\tuple{w,w'} \in R_{\llbracket t(x/z) \rrbracket^{J,v}_{w,\emptyset}}$. Now, we have $N,w' \models_{v} x = y$ for all $w' \in W$, as well as $\llbracket t(x/z) \rrbracket^{J,v}_{w,\emptyset}$ $=$ $\llbracket t(y/z) \rrbracket^{J,v}_{w,\emptyset}$ by $v(x)$ $=$ $v(y)$ and Proposition~\ref{prop:valuation}. So by inductive hypothesis we obtain $N,w' \models_v \psi(y/z)$ for all $w' \in W$ such that $\tuple{w,w'} \in R_{\llbracket t(y/z) \rrbracket^{J,v}_{w,\emptyset}}$. Thus, $N,w \models_v (K_t \psi)(y/z)$.
        \end{itemize}

        \item For $\eid$, i.e., $c=c \to \some x (x = c)$, suppose $N,w \models_v c=c$. Since $N,w \models_{v[x \mapsto J(c,w,J(=,w))]} x = c$, we have $N,w \models_v \some x (x=c)$, as required.
        \item For $\dd$, i.e., $x \neq y$ if $\type(x) \neq \type(y)$, suppose $\type(x) \neq \type(y)$ and let $N$, $w$ and $v$ be arbitrary. By the definition of valuation, each of $v(x)$ and $v(y)$ is in $D_{\type(x)}$ and $D_{\type(y)}$, respectively. Since $\type(x) \neq \type(y)$, $D_{\type(x)}$ and $D_{\type(y)}$ must be disjoint. Thus $N,w \models_v x \neq y$, as required.
        \item For $\axk$, i.e., $K_{t}(\varphi \to \psi) \to (K_{t}\varphi \to K_{t}\psi)$, suppose $N,w \models_v K_{t}(\varphi \to \psi)$ and $N,w \models_v K_{t}\varphi$. Let $w'$ be any world such that $\tuple{w,w'} \in R_{\llbracket t \rrbracket^{J,v}_{w,\emptyset}}$. Then we have $N,w' \models_v \varphi \to \psi$ and $N,w' \models_v \varphi$. Thus $N,w' \models_v \psi$, as required.
        \item For $\barcan$, i.e., $\all x K_{t}\varphi \to K_{t} \all x \varphi$ for $x$ not occurring in $t$, suppose $N,w \models_v \all x K_{t}\varphi$. To show $N,w \models_{v} K_{t} \all x \varphi$, let $w'$ be any world such that $\tuple{w,w'} \in R_{\llbracket t \rrbracket^{J,v}_{w,\emptyset}}$ and take any $d \in D_{\type(x)}$. By our supposition, we have $N,w \models_{v[x \mapsto d]} K_{t} \varphi$. Now $\llbracket t \rrbracket^{J,v}_{w,\emptyset}$ $=$ $\llbracket t \rrbracket^{J,v[x \mapsto d]}_{w,\emptyset}$ holds since $x$ does not occur in $t$. Thus $N,w' \models_{v[x \mapsto d]} \varphi$, as required.
        \item For $\kni$, i.e., $x \neq y \to K_{t} x \neq y$, suppose $N,w \models_v x \neq y$. By definition, obviously $N,w' \models_v x \neq y$ for all worlds $w'$. Thus $N,w \models K_{t} x \neq y$, as required.
    \end{itemize}
    By the above argument the proof has completed.
\end{proof}

We can now prove the semantic incompleteness of $\Hi\K$ as follows.

\begin{theorem}
\label{thm:unprovability}
Let $\Sigma = \tuple{\VAR,\CON,\FUN,\REL,\type}$ be a signature, $x \in \VAR$, $c \in \CON$ with $\type(x) = \type(c)$ and $P \in \REL$ with $\type(P)$ $=$ $\tuple{\agtobj}$. A formula $x = c \to (P(x) \to P(c))$ is not provable in $\Hi\K$.
\end{theorem}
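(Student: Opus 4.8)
The plan is to argue by contraposition, stitching together the two facts already in hand for the non-standard semantics. Suppose, toward a contradiction, that $x = c \to (P(x) \to P(c))$ is provable in $\Hi\K$. Then, by Theorem~\ref{thm:soundness}, this formula is valid in the non-standard semantics. But Proposition~\ref{prop:invalidity-of-unprovable-formula} exhibits a concrete non-standard model $N = \tuple{D,W,R,J}$, a world $w$ and a valuation $v$ at which the formula fails---namely the model with $D_{\agt} = \setof{\alpha,\beta}$, $J(c,w,J(=,w)) = \alpha$, $J(c,w,\setof{\alpha}) = \beta$, $J(P,w) = \setof{\alpha}$ and $v(x) = \alpha$. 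This contradicts validity. Hence $x = c \to (P(x) \to P(c))$ is not provable in $\Hi\K$, which, together with Proposition~\ref{prop:validity-of-unprovable-formula}, yields the claimed semantic incompleteness.

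One small bookkeeping point: the statement only assumes $\type(x) = \type(c)$, so in principle $\type(x) = \type(c) = \obj$ is also allowed. This is handled exactly as in the ``without loss of generality'' clause of the proof of Proposition~\ref{prop:invalidity-of-unprovable-formula}: the witnessing non-standard model is built symmetrically in whichever of the two sorts $\type(x)$ happens to be, so the refutation is uniform over both admissible typings. Nothing else in the argument depends on the choice of sort.

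The genuinely substantive work sits upstream, in Theorem~\ref{thm:soundness}, not in the present theorem. The delicate points there are that $\ue$ and $\ps$ must remain valid in the non-standard semantics even though $x = c \to (P(x) \to P(c))$ must not; this works because $\ue$ and $\ps$ only ever substitute a \emph{variable} for a variable, so Proposition~\ref{prop:valuation} applies and the index set $X = J(P,w)$ stays fixed across the substitution, whereas the offending instance substitutes a \emph{constant}, for which no analogue of Proposition~\ref{prop:valuation} can hold since $J(c,w,X)$ genuinely varies with $X$. Similarly, $\axk$ and $\barcan$ survive because the term indexing a modal operator is evaluated at $X = \emptyset$, independently of any relation symbol's interpretation. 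Given all of this, the present theorem is essentially a one-line corollary of Theorem~\ref{thm:soundness} and Proposition~\ref{prop:invalidity-of-unprovable-formula}, and I anticipate no further obstacle in its proof.
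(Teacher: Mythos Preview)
Your proposal is correct and follows exactly the paper's own proof: assume provability, apply soundness (Theorem~\ref{thm:soundness}) to obtain validity in the non-standard semantics, and contradict Proposition~\ref{prop:invalidity-of-unprovable-formula}. The extra paragraphs you add about the typing symmetry and about where the real work lies are accurate commentary but go beyond what the paper records for this theorem; the paper's proof is the one-line contraposition you give in your first paragraph.
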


\begin{proof}
    If $x = c \to (P(x) \to P(c))$ is provable in $\Hi\K$, then by the soundness (Theorem~\ref{thm:soundness}) it must be valid in the non-standard semantics, which contradicts Proposition~\ref{prop:invalidity-of-unprovable-formula}.
\end{proof}

\begin{corollary}[Semantic Incompleteness of $\Hi\K$]
\label{cor:semantic-incompleteness}
    The Hilbert-style system $\Hi\K$ is semantically incomplete with respect to the TML-semantics, i.e., there exists some formula $\varphi$ such that $\varphi$ is valid in the TML-semantics but not provable in $\Hi\K$.
\end{corollary}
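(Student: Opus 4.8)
The plan is to assemble the corollary directly from the two results already established for the witness formula $x = c \to (P(x) \to P(c))$. First I would fix a signature $\Sigma = \tuple{\VAR,\CON,\FUN,\REL,\type}$ that contains a variable $x \in \VAR$, a constant $c \in \CON$ with $\type(x) = \type(c)$, and a relation symbol $P \in \REL \setminus \setof{=}$ with $\type(P) = \tuple{\agtobj}$; such a signature exists since Definition~\ref{def:sig-type} places no obstruction to including such symbols (for instance $\REL$ may be taken to be $\setof{=,P}$). I would then observe that $\varphi \coloneqq x = c \to (P(x) \to P(c))$ is genuinely a formula of this signature's language: by Definition~\ref{def:language}, $P(x)$ and $P(c)$ are atomic formulas because $\type(x) \preccurlyeq \agtobj$ and $\type(c) \preccurlyeq \agtobj$, and $x = c$ is atomic because $\type(=) = \tuple{\agtobj,\agtobj}$ while $\type(x), \type(c) \preccurlyeq \agtobj$; the Boolean combination via $\to$ (abbreviating $\neg(\,\cdot\, \land \neg\,\cdot\,)$) is then again a formula.

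Next I would apply Proposition~\ref{prop:validity-of-unprovable-formula} to conclude that $\varphi$ is valid in the TML-semantics, and Theorem~\ref{thm:unprovability} to conclude that $\varphi$ is not provable in $\Hi\K$. Putting these two facts together, $\varphi$ witnesses the existence of a formula that is valid in the TML-semantics yet underivable in $\Hi\K$, which is exactly what it means for $\Hi\K$ to be semantically incomplete with respect to the TML-semantics.

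I do not expect any real obstacle here: all the substantive work — the validity argument, the soundness of $\Hi\K$ for the non-standard semantics, and the invalidity of $\varphi$ in that semantics — has already been discharged in Propositions~\ref{prop:validity-of-unprovable-formula} and~\ref{prop:forall} and Theorems~\ref{thm:soundness} and~\ref{thm:unprovability}. The only point deserving a moment's care is the purely syntactic one above, namely that the chosen signature legitimately admits the symbols $x$, $c$, $P$ with the stated types and that $\varphi$ parses as a formula; once that is in place, the corollary follows immediately.
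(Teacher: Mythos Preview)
Your proposal is correct and follows exactly the paper's own proof, which simply cites Proposition~\ref{prop:validity-of-unprovable-formula} and Theorem~\ref{thm:unprovability}; your extra care in verifying that a suitable signature exists and that $\varphi$ is a well-formed formula only makes the argument more explicit.
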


\begin{proof}
    By Proposition~\ref{prop:validity-of-unprovable-formula} and Theorem~\ref{thm:unprovability}.
\end{proof}

\section{Conclusion}

In this paper, we proved that Liberman et~al.\cite{Liberman2020}'s Hilbert-style system $\Hi\K$ for the term-modal logic $\K$ with equality and non-rigid terms is semantically incomplete by introducing the non-standard semantics for which $\Hi\K$ is sound but in which $x = c \to (P(x) \to P(c))$ is not valid.

A further direction to be pursued is to give sound and complete Hilbert-style systems for term-modal logics including $\K$ with equality and non-rigid terms. Such systems, for example, might be obtained as slight modifications of the system given in Fagin et~al.~\cite[p. 90]{Fagin2003}. Another further direction that might be worth studying is to apply the non-standard semantics to the analysis of natural language. As Example~\ref{ex:lewis} suggests, it is reasonable to see $J(P)$ in $J(c,w,J(P))$ as a kind of \textit{context} uniquely determining the denotation of a constant $c$ at a world $w$. Thus, the non-standard semantics might be seen as a semantics capturing the context-dependency of the denotations of nouns in natural language.

\section*{Acknowledgement}
This work was supported by JSPS KAKENHI Grant Number JP23KJ2146.

\bibliographystyle{eptcs}

\begin{thebibliography}{10}
\providecommand{\bibitemdeclare}[2]{}
\providecommand{\surnamestart}{}
\providecommand{\surnameend}{}
\providecommand{\urlprefix}{Available at }
\providecommand{\url}[1]{\texttt{#1}}
\providecommand{\href}[2]{\texttt{#2}}
\providecommand{\urlalt}[2]{\href{#1}{#2}}
\providecommand{\doi}[1]{doi:\urlalt{https://doi.org/#1}{#1}}
\providecommand{\eprint}[1]{arXiv:\urlalt{https://arxiv.org/abs/#1}{#1}}
\providecommand{\bibinfo}[2]{#2}

\bibitemdeclare{mastersthesis}{Achen2017}
\bibitem{Achen2017}
\bibinfo{author}{A.~\surnamestart Achen\surnameend} (\bibinfo{year}{2017}):
    \emph{\bibinfo{title}{{Putting the Agents Back in the Domain: A Two-Sorted
    Term-Modal Logic}}}.
\newblock Master's thesis, \bibinfo{school}{University of Copenhagen},
    \doi{10.48550/arXiv.1811.01590}.
\newblock \bibinfo{note}{Bachelor thesis}.

\bibitemdeclare{book}{Enderton2001}
\bibitem{Enderton2001}
\bibinfo{author}{Herbert \surnamestart Enderton\surnameend}
    (\bibinfo{year}{2001}): \emph{\bibinfo{title}{{A Mathematical Introduction to
    Logic}}}, \bibinfo{edition}{second} edition.
\newblock \bibinfo{publisher}{Academic Press}, \doi{10.1016/C2009-0-22107-6}.

\bibitemdeclare{book}{Fagin2003}
\bibitem{Fagin2003}
\bibinfo{author}{Ronald \surnamestart Fagin\surnameend}, \bibinfo{author}{Yoram
    \surnamestart Moses\surnameend}, \bibinfo{author}{Joseph~Y. \surnamestart
    Halpern\surnameend} \& \bibinfo{author}{Moshe~Y. \surnamestart
    Vardi\surnameend} (\bibinfo{year}{2003}): \emph{\bibinfo{title}{{Reasoning
    about Knowledge}}}.
\newblock \bibinfo{publisher}{MIT press}.

\bibitemdeclare{article}{Fitting2001}
\bibitem{Fitting2001}
\bibinfo{author}{Melvin \surnamestart Fitting\surnameend},
    \bibinfo{author}{Lars \surnamestart Thalmann\surnameend} \&
    \bibinfo{author}{Andrei \surnamestart Voronkov\surnameend}
    (\bibinfo{year}{2001}): \emph{\bibinfo{title}{{Term-Modal Logics}}}.
\newblock {\slshape \bibinfo{journal}{Studia Logica}} \bibinfo{volume}{69}, pp.
    \bibinfo{pages}{133--169}, \doi{10.1023/A:1013842612702}.

\bibitemdeclare{phdthesis}{Frijters2021}
\bibitem{Frijters2021}
\bibinfo{author}{Stef \surnamestart Frijters\surnameend}
    (\bibinfo{year}{2021}): \emph{\bibinfo{title}{{All Doctors Have an Obligation
    to Care for Their Patients: Term-modal Logics for Ethical Reasoning with
    Quantified Deontic Statements}}}.
\newblock Ph.D. thesis, \bibinfo{school}{Ghent University}.

\bibitemdeclare{inproceedings}{Frijters2023}
\bibitem{Frijters2023}
\bibinfo{author}{Stef \surnamestart Frijters\surnameend}
    (\bibinfo{year}{2023}): \emph{\bibinfo{title}{{An Andersonian-Kangerian
    Reduction of Term-Modal Deontic Logics}}}.
\newblock In: {\slshape \bibinfo{booktitle}{Deontic Logic and Normative Systems
    16th International Conference, DEON 2023}}, \bibinfo{publisher}{College
    Publications}, pp. \bibinfo{pages}{159--176}.

\bibitemdeclare{inproceedings}{Frijters2021a}
\bibitem{Frijters2021a}
\bibinfo{author}{Stef \surnamestart Frijters\surnameend} \&
    \bibinfo{author}{Thijs \surnamestart De~Coninck\surnameend}
    (\bibinfo{year}{2021}): \emph{\bibinfo{title}{{The Manchester Twins :
    Conflicts between Directed Obligations}}}.
\newblock In \bibinfo{editor}{Fenrong \surnamestart Liu\surnameend},
    \bibinfo{editor}{Alessandra \surnamestart Marra\surnameend},
    \bibinfo{editor}{Paul \surnamestart Portner\surnameend} \&
    \bibinfo{editor}{Frederik \surnamestart Van De~Putte\surnameend}, editors:
    {\slshape \bibinfo{booktitle}{Deontic Logic and Normative Systems: 15th
    International Conference, Proceedings}}, \bibinfo{publisher}{College
    Publications}.

\bibitemdeclare{conference}{Frijters2019}
\bibitem{Frijters2019}
\bibinfo{author}{Stef \surnamestart Frijters\surnameend}, \bibinfo{author}{Joke
    \surnamestart Meheus\surnameend} \& \bibinfo{author}{Frederik \surnamestart
    Van De~Putte\surnameend} (\bibinfo{year}{2019}):
    \emph{\bibinfo{title}{{Quantifing over the Indexes of Obligation
    Operators}}}.
\newblock In: {\slshape \bibinfo{booktitle}{Formal Ethics 2019}},
    \bibinfo{organization}{Ghent University}, \bibinfo{address}{Gent}.

\bibitemdeclare{article}{Frijters2020}
\bibitem{Frijters2020}
\bibinfo{author}{Stef \surnamestart Frijters\surnameend} \&
    \bibinfo{author}{Frederik Van~De \surnamestart Putte\surnameend}
    (\bibinfo{year}{2020}): \emph{\bibinfo{title}{{Classical Term-modal
    Logics}}}.
\newblock {\slshape \bibinfo{journal}{Journal of Logic and Computation}},
    \doi{10.1093/logcom/exaa053}.

\bibitemdeclare{inproceedings}{Kooi2008}
\bibitem{Kooi2008}
\bibinfo{author}{B.~\surnamestart Kooi\surnameend} (\bibinfo{year}{2008}):
    \emph{\bibinfo{title}{{Dynamic Term-modal Logic}}}.
\newblock In \bibinfo{editor}{J.~\surnamestart van Benthem\surnameend},
    \bibinfo{editor}{S.~\surnamestart Ju\surnameend} \&
    \bibinfo{editor}{F.~\surnamestart Veltman\surnameend}, editors: {\slshape
    \bibinfo{booktitle}{{A Meeting of the Minds. Proceedings of the Workshop on
    Logic, Rationality and Interaction, Beijing, 2007, Texts in Computing
    Computer Science 8}}}, \bibinfo{publisher}{College Publications}, pp.
    \bibinfo{pages}{173--185}.

\bibitemdeclare{article}{Liberman2020}
\bibitem{Liberman2020}
\bibinfo{author}{Andr{\'{e}}s~Occhipinti \surnamestart Liberman\surnameend},
    \bibinfo{author}{Andreas \surnamestart Achen\surnameend} \&
    \bibinfo{author}{Rasmus~Kr{\ae}mmer \surnamestart Rendsvig\surnameend}
    (\bibinfo{year}{2020}): \emph{\bibinfo{title}{{Dynamic Term-modal Logics for
    First-order Epistemic Planning}}}.
\newblock {\slshape \bibinfo{journal}{Artificial Intelligence}}
    \bibinfo{volume}{286}, p. \bibinfo{pages}{103305},
    \doi{10.1016/j.artint.2020.103305}.

\bibitemdeclare{article}{Liberman2023erratum}
\bibitem{Liberman2023erratum}
\bibinfo{author}{Andr{\'{e}}s~Occhipinti \surnamestart Liberman\surnameend},
    \bibinfo{author}{Andreas \surnamestart Achen\surnameend} \&
    \bibinfo{author}{Rasmus~Kr{\ae}mmer \surnamestart Rendsvig\surnameend}
    (\bibinfo{year}{2023}): \emph{\bibinfo{title}{{Erratum to “Dynamic
    Term-Modal Logics for First-Order Epistemic Planning” [artif. Intell. 286
    (2020) 103305]}}}.
\newblock {\slshape \bibinfo{journal}{Artificial Intelligence}}
    \bibinfo{volume}{323}, p. \bibinfo{pages}{103969},
    \doi{10.1016/j.artint.2023.103969}.

\bibitemdeclare{incollection}{Liberman2019}
\bibitem{Liberman2019}
\bibinfo{author}{Andr{\'{e}}s~Occhipinti \surnamestart Liberman\surnameend} \&
    \bibinfo{author}{Rasmus~K. \surnamestart Rendsvig\surnameend}
    (\bibinfo{year}{2019}): \emph{\bibinfo{title}{{Dynamic Term-Modal Logic for
    Epistemic Social Network Dynamics}}}.
\newblock In: {\slshape \bibinfo{booktitle}{Logic, Rationality, and
    Interaction}}, \bibinfo{publisher}{Springer Berlin Heidelberg}, pp.
    \bibinfo{pages}{168--182}, \doi{10.1007/978-3-662-60292-8_13}.

\bibitemdeclare{inproceedings}{Liberman2020a}
\bibitem{Liberman2020a}
\bibinfo{author}{Andr{\'{e}}s~Occhipinti \surnamestart Liberman\surnameend} \&
    \bibinfo{author}{Rasmus~Kr{\ae}mmer \surnamestart Rendsvig\surnameend}
    (\bibinfo{year}{2020}): \emph{\bibinfo{title}{{Decidability Results in
    First-Order Epistemic Planning}}}.
\newblock In: {\slshape \bibinfo{booktitle}{Proceedings of the Twenty-Ninth
    International Joint Conference on Artificial Intelligence}},
    \bibinfo{publisher}{International Joint Conferences on Artificial
    Intelligence Organization}, pp. \bibinfo{pages}{4161--4167},
    \doi{10.24963/ijcai.2020/575}.

\bibitemdeclare{misc}{Naumov2023}
\bibitem{Naumov2023}
\bibinfo{author}{Pavel \surnamestart Naumov\surnameend} \&
    \bibinfo{author}{Anna \surnamestart Ovchinnikova\surnameend}
    (\bibinfo{year}{2023}): \emph{\bibinfo{title}{{De Re and De Dicto Knowledge
    in Egocentric Setting}}}, \doi{10.48550/ARXIV.2308.00001}.

\bibitemdeclare{article}{Naumov2019}
\bibitem{Naumov2019}
\bibinfo{author}{Pavel \surnamestart Naumov\surnameend} \& \bibinfo{author}{Jia
    \surnamestart Tao\surnameend} (\bibinfo{year}{2019}):
    \emph{\bibinfo{title}{{Everyone Knows that Someone Knows: Quantifiers over
    Epistemic Agents}}}.
\newblock {\slshape \bibinfo{journal}{The Review of Symbolic Logic}}
    \bibinfo{volume}{12}(\bibinfo{number}{2}), pp. \bibinfo{pages}{255--270},
    \doi{10.1017/s1755020318000497}.

\bibitemdeclare{inproceedings}{Rendsvig2010}
\bibitem{Rendsvig2010}
\bibinfo{author}{Rasmus~K. \surnamestart Rendsvig\surnameend}
    (\bibinfo{year}{2010}): \emph{\bibinfo{title}{{Epistemic Term-Modal Logic}}}.
\newblock In \bibinfo{editor}{Marija \surnamestart Slavkovik\surnameend},
    editor: {\slshape \bibinfo{booktitle}{{Proceedings of the 15th Student
    Session of the European Summer School in Logic, Language and Information}}},
    pp. \bibinfo{pages}{37--46}.

\bibitemdeclare{mastersthesis}{Rendsvig2011}
\bibitem{Rendsvig2011}
\bibinfo{author}{Rasmus~K. \surnamestart Rendsvig\surnameend}
    (\bibinfo{year}{2011}): \emph{\bibinfo{title}{{Towards a Theory of Semantic
    Competence}}}.
\newblock Master's thesis, \bibinfo{school}{Roskilde University}.

\bibitemdeclare{inproceedings}{Sawasaki2021}
\bibitem{Sawasaki2021}
\bibinfo{author}{Takahiro \surnamestart Sawasaki\surnameend} \&
    \bibinfo{author}{Katsuhiko \surnamestart Sano\surnameend}
    (\bibinfo{year}{2021}): \emph{\bibinfo{title}{{Term-Sequence-Dyadic Deontic
    Logic}}}.
\newblock In \bibinfo{editor}{Fenrong \surnamestart Liu\surnameend},
    \bibinfo{editor}{Alessandra \surnamestart Marra\surnameend},
    \bibinfo{editor}{Paul \surnamestart Portner\surnameend} \&
    \bibinfo{editor}{Frederik Van~De \surnamestart Putte\surnameend}, editors:
    {\slshape \bibinfo{booktitle}{{Deontic Logic and Normative Systems: 15th
    International Conference, DEON 2020/2021}}}, \bibinfo{publisher}{College
    Publication}, pp. \bibinfo{pages}{376--393}.

\bibitemdeclare{inproceedings}{Sawasaki2019a}
\bibitem{Sawasaki2019a}
\bibinfo{author}{Takahiro \surnamestart Sawasaki\surnameend},
    \bibinfo{author}{Katsuhiko \surnamestart Sano\surnameend} \&
    \bibinfo{author}{Tomoyuki \surnamestart Yamada\surnameend}
    (\bibinfo{year}{2019}): \emph{\bibinfo{title}{{Term-Sequence-Modal Logics}}}.
\newblock In \bibinfo{editor}{Patrick \surnamestart Blackburn\surnameend},
    \bibinfo{editor}{Emiliano \surnamestart Lorini\surnameend} \&
    \bibinfo{editor}{Meiyun \surnamestart Guo\surnameend}, editors: {\slshape
    \bibinfo{booktitle}{{Logic, Rationality, and Interaction: 7th International
    Workshop, LORI 2019, Chongqing, China, October 18--21, 2019, Proceedings}}},
    \bibinfo{volume}{11813}, \bibinfo{publisher}{Springer-Verlag Berlin
    Heidelberg}, pp. \bibinfo{pages}{244--258},
    \doi{10.1007/978-3-662-60292-8_18}.

\bibitemdeclare{article}{Sedlar2014}
\bibitem{Sedlar2014}
\bibinfo{author}{Igor \surnamestart Sedl{\'a}r\surnameend}
    (\bibinfo{year}{2014}): \emph{\bibinfo{title}{{Term-Modal Logics of
    Evidence}}}.
\newblock {\slshape \bibinfo{journal}{Epistemic Logic for Individual, Social,
    and Interactive Epistemology (ESSLLI 2014), T{\"u}bingen, Germany}}.

\bibitemdeclare{phdthesis}{Thalmann2000}
\bibitem{Thalmann2000}
\bibinfo{author}{Lars \surnamestart Thalmann\surnameend}
    (\bibinfo{year}{2000}): \emph{\bibinfo{title}{{Term-Modal Logic and
    Quantifier-free Dynamic Assignment Logic}}}.
\newblock Ph.D. thesis, \bibinfo{school}{Uppsala University}.

\bibitemdeclare{inproceedings}{Wang2018}
\bibitem{Wang2018}
\bibinfo{author}{Yanjing \surnamestart Wang\surnameend} \&
    \bibinfo{author}{Jeremy \surnamestart Seligman\surnameend}
    (\bibinfo{year}{2018}): \emph{\bibinfo{title}{{When Names are not Commonly
    Known: Epistemic Logic with Assignments}}}.
\newblock In \bibinfo{editor}{Guram \surnamestart Bezhanishvili\surnameend},
    \bibinfo{editor}{Giovanna \surnamestart D'Agostino\surnameend},
    \bibinfo{editor}{George \surnamestart Metcalfe\surnameend} \&
    \bibinfo{editor}{Thomas \surnamestart Studer\surnameend}, editors: {\slshape
    \bibinfo{booktitle}{{Advances in Modal Logic}}}, {\slshape
    \bibinfo{series}{Proceedings of the12th Conference on Advances in Modal
    Logic, Held in Bern, Switzerland, August 27-31, 2018}}~\bibinfo{volume}{12},
    \bibinfo{publisher}{College Publications}, pp. \bibinfo{pages}{611--628},
    \doi{10.48550/arXiv.1805.03852}.

\bibitemdeclare{article}{Wang2022}
\bibitem{Wang2022}
\bibinfo{author}{Yanjing \surnamestart Wang\surnameend},
    \bibinfo{author}{Yu~\surnamestart Wei\surnameend} \& \bibinfo{author}{Jeremy
    \surnamestart Seligman\surnameend} (\bibinfo{year}{2022}):
    \emph{\bibinfo{title}{{Quantifier-free Epistemic Term-modal Logic with
    Assignment Operator}}}.
\newblock {\slshape \bibinfo{journal}{Annals of Pure and Applied Logic}}
    \bibinfo{volume}{173}(\bibinfo{number}{3}), p. \bibinfo{pages}{103071},
    \doi{10.1016/j.apal.2021.103071}.

\end{thebibliography}

\end{document}